\def\@bibdataout@aps{%
 \immediate\write\@bibdataout{%
  @CONTROL{%
   apsrev41Control,author="08",editor="1",pages="0",title="0",year="1",eprint="1"%
  }%
 }%
 \if@filesw
  \immediate\write\@auxout{\string\citation{apsrev41Control}}%
 \fi
}%
    \newcommand{\linecomment}[1]{\State \(\triangleright\) {\footnotesize #1} \normalsize}
\newtheorem{theorem}{Theorem}
\newcommand{\qw}[1][-1]{\ar @{-} [0,#1]}
\newcommand{\qwx}[1][-1]{\ar @{-} [#1,0]}
\newcommand{\cw}[1][-1]{\ar @{=} [0,#1]}
\newcommand{\gate}[1]{*+<.6em>{#1} \POS ="i","i"+UR;"i"+UL **\dir{-};"i"+DL **\dir{-};"i"+DR **\dir{-};"i"+UR **\dir{-},"i" \qw}
\newcommand{\meter}{*=<1.8em,1.4em>{\xy ="j","j"-<.778em,.322em>;{"j"+<.778em,-.322em> \ellipse ur,_{}},"j"-<0em,.4em>;p+<.5em,.9em> **\dir{-},"j"+<2.2em,2.2em>*{},"j"-<2.2em,2.2em>*{} \endxy} \POS ="i","i"+UR;"i"+UL **\dir{-};"i"+DL **\dir{-};"i"+DR **\dir{-};"i"+UR **\dir{-},"i" \qw}
\newcommand{\control}{*!<0em,.025em>-=-<.2em>{\bullet}}
\newcommand{\ctrl}[1]{\control \qwx[#1] \qw}
\newcommand{\lstick}[1]{*!R!<.5em,0em>=<0em>{#1}}
\newcommand{\Qcircuit}{\xymatrix @*=<0em>}
\def\ket#1{\left|#1\right\rangle}
\newcommand{\CRej}{\text{RejF }}
\newcommand{\reset}{\mathrm{reset}}
\newcommand{\eq}[1]{\hyperref[eq:#1]{(\ref*{eq:#1})}}
\renewcommand{\sec}[1]{\hyperref[sec:#1]{Section~\ref*{sec:#1}}}
\newcommand{\app}[1]{\hyperref[app:#1]{Appendix~\ref*{app:#1}}}
\newcommand{\fig}[1]{\hyperref[fig:#1]{Figure~\ref*{fig:#1}}}
\newcommand{\thm}[1]{\hyperref[thm:#1]{Theorem~\ref*{thm:#1}}}
\newcommand{\lem}[1]{\hyperref[lem:#1]{Lemma~\ref*{lem:#1}}}
\newcommand{\tab}[1]{\hyperref[tab:#1]{Table~\ref*{tab:#1}}}
\newcommand{\cor}[1]{\hyperref[cor:#1]{Corollary~\ref*{cor:#1}}}
\newcommand{\alg}[1]{\hyperref[alg:#1]{Algorithm~\ref*{alg:#1}}}
\newcommand{\defn}[1]{\hyperref[def:#1]{Definition~\ref*{def:#1}}}
\newcommand{\tout}[1]{{}}
\newcommand{\etal}{\emph{et al.}}
\newcommand{\ii}{\mathrm{i}}
\newcommand{\ee}{\mathrm{e}}
\begin{document}


\title{Efficient Bayesian Phase Estimation}
\author{Nathan Wiebe}
\affiliation{Quantum Architectures and Computation Group, Microsoft Research, Redmond, WA (USA)}

\author{Chris Granade}
\affiliation{Centre for Engineered Quantum Systems, Sydney, NSW (Australia)}
\affiliation{School of Physics, University of Sydney, Sydney, NSW (Australia)}

\begin{abstract}
    We provide a new efficient adaptive algorithm for performing phase
    estimation that does not require that the user infer the bits of the
    eigenphase in reverse order; rather it directly infers the phase and
    estimates the uncertainty in the phase directly from experimental data. Our
    method is highly flexible, recovers from failures, and can be run in the
    presence of substantial decoherence and other experimental imperfections
    and is as fast or faster than existing algorithms.
\end{abstract}

\maketitle

\section{Introduction}
\label{sec:intro}

Eigenvalue estimation has been a cornerstone of physics since the dawn of spectroscopy.  In recent years, ideas from quantum information have revolutionized the ways that we estimate these values by providing methods that require exponentially fewer experiments than statistical sampling and total experimental time that saturates the Heisenberg limit, which is the best possible scaling allowed by quantum mechanics.  This quantum approach, known as phase estimation (PE), is crucial to acheive many of the celebrated speedups promised by quantum computing~\cite{shor_polynomial-time_1995,BHM+02,ADL+05,harrow2009quantum,lanyon2010towards} and as such optimizing these algorithms is especially important for present day experimental demonstrations of such algorithms because the number of quantum gates that can be performed is often severely limitted by decoherence.

The most popular approach to PE is known as iterative phase estimation (IPE),
as it forgoes the use of quantum resources to infer the eigenvalues of unitary
matrix in favor of using a classical inference algorithm to estimate the
eigenvalues from experimental
data~\cite{Kit96,kitaev2002classical,higgins2007entanglement,SHF14}.  Although
the methodologies used for inference stretch back to the nineties,
the ongoing revolution in machine learning that has lead to a plethora of
improved classical inference procedures.  This raises an important question:
can these ideas be used to speed up phase estimation or make it more robust to
experimental error?

We answer this in the affirmative by providing a new classical inference
method, inspired by recent work on particle filter methods, that is tailored
to phase estimation. Our method not only provides performance
advantages over existing iterative methods~\cite{Kit96,kitaev2002classical}, but
is also robust to experimental imperfections such as depolarizing noise and
small systematic errors.

Before going into detail about our algorithm we will first review the phase
estimation circuit and Bayesian approaches to PE.  We then show that rejection
sampling can be used to efficiently approximate Bayesian inference, thereby
allowing PE to inherit the speed and robustness of Bayesian approaches while
retaining the efficiency of traditional methods.

Iterative PE infers the eigenvalue of a given eigenvector of a unitary matrix
$U$ from a set of experiments that are performed on the circuit
\begin{equation*}
    \Qcircuit @C=1em @R=1em {
        \lstick{\ket{0}}    & \gate{H}  & \gate{Z(-M \theta)}   & \ctrl{1}   & \gate{H} & \meter & \cw \\
        \lstick{\ket{\phi}} & {/} \qw   & \qw                   & \gate{U^M} & \qw      & \qw    & \qw
    }
\end{equation*}
where $U\ket{\phi} = \ee^{\ii\phi}\ket{\phi}$ for an unknown eigenphase $\phi \in \mathbb{R}$,
and where $Z(M \theta) = \ee^{\ii M \theta Z}$ is a rotation that can be used to compare the eigenphase of $U$ to
a known reference value.
Most PE algorithms use this circuit to infer the bits
in a binary expansion of $\phi$ in order from least significant to most significant.  The process is can be made near optimal (up to $\log^*$ factors~\cite{SHF14}) and
affords an efficient inference method for these bits.


\section{Bayesian Phase Estimation}
\label{sec:bayesian-phase-est}

Bayesian phase estimation, as introduced by Svore \etal~\cite{SHF14}, involves performing a 
set of experiments and then updating the prior distribution using Bayes' rule.
For example, if an experiment is performed with using $M$ repetitions of $U$,
$Z(M \theta)$ and a measurement outcome $E\in \{0,1\}$ is observed then Bayes'
rule states that the \emph{posterior probability} distribution for $\phi$
after observing the datum is
\begin{equation}
P(\phi|E;\theta,M) = \frac{P(E|\phi;\theta,M)P(\phi)}{\int P(E|\phi;\theta,M)P(\phi)\mathrm{d}{\phi}}.\label{eq:update}
\end{equation}
The final ingredient that is needed to perform an update is the likelihood
function $P(0|\phi;\theta,M)$. For phase estimation,
\begin{gather}
    \label{eq:likenodecohere}
    \begin{aligned}
        P(0|\phi;\theta,M) & = \frac{1+\cos(M[\phi +\theta])}{2},\\
        P(1|\phi;\theta,M) & = \frac{1-\cos(M[\phi +\theta])}{2}.
    \end{aligned}
\end{gather}
After using~\eq{update} to update the posterior distribution we then set the prior distribution to equal the posterior distribution.  This process is then repeated for each of the random experiments in the data set.

Unlike conventional methods, Bayesian inference returns a posterior
distribution over the phase. The mean and standard deviation of this
distribution provide an estimate of the true eigenvalue and the algorithm's
uncertainty in that value. More sophisticated estimates of uncertainty, such
as credible regions, can also be extracted from the posterior
distribution~\cite{granade_robust_2012,ferrie_high_2014}. Estimates of
the uncertainty are crucial here because we use them to design highly informative experiments 
 and also because they allow the protocol to be terminated when a threshold accuracy has been reached.

\begin{figure}[t!]
    \begin{centering}
        \includegraphics[width=0.8\linewidth]{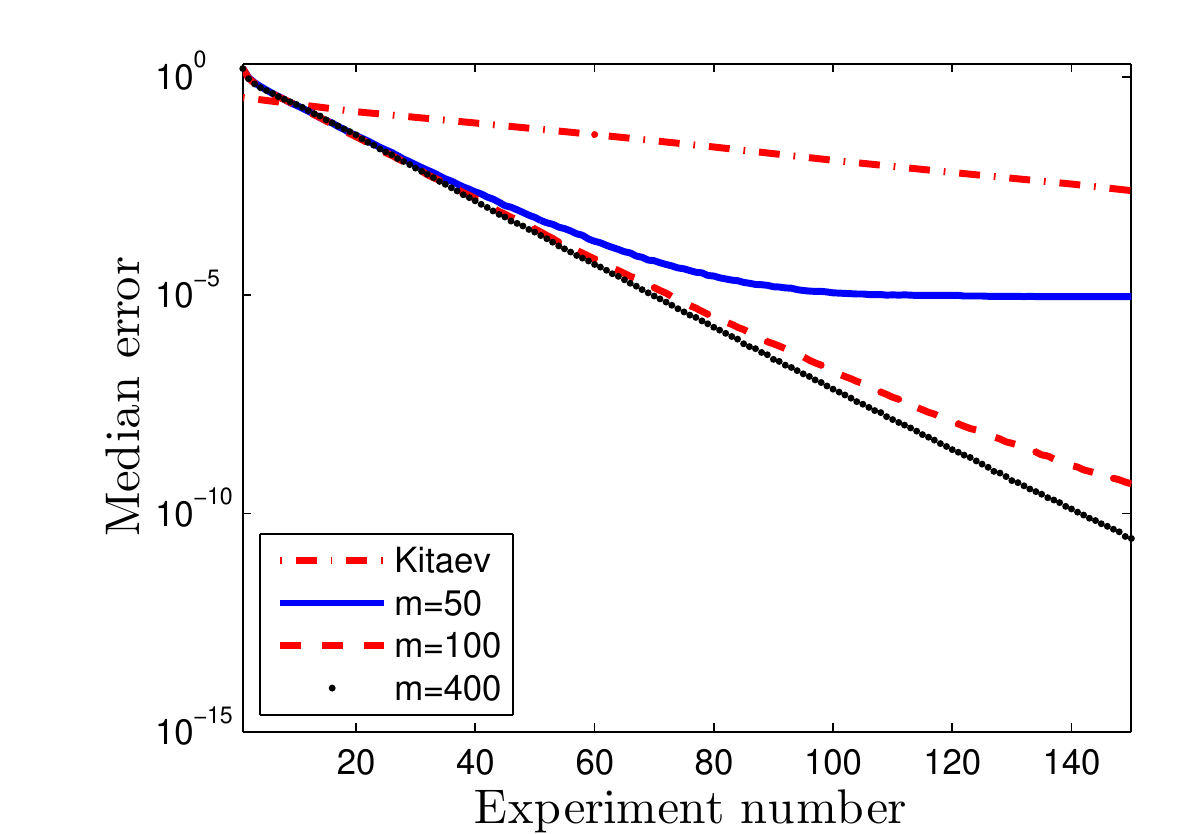}
    \end{centering}
    \caption{\label{fig:PEerror}
     Median errors in phase estimation for 10~000 random initial choices of the true eigenphase.
    }
\end{figure}

\section{Approximate Bayesian Phase Estimation}

Exact Bayesian inference is impossible if the eigenphases are
continuous so approximations are needed to
make the procedure tractable.  Rather than na\"ively discritizing the prior
distribution, modern methods discretize by sampling from the prior and
then perform Bayesian inference on the discrete set of samples (often called
``particles'').  These particle filter methods methods are quite powerful and
have become a mainstay in computer vision and machine
learning~\cite{haykin2004kalman,smith2013sequential,isard_condensationconditional_1998}.

Despite the power of these methods, they are often hard to implement and even
harder to deploy in a memory limited environment such as on an embedded
controller or an FPGA. With the increasing use of FPGAs in the control of
quantum information experiments
\cite{shulman_suppressing_2014,casagrande_design_2014,hornibrook_cryogenic_2015},
overcoming this limitation presents a
significant advantage to modern experiment design. 

We propose
 a much simpler approach that we call Rejection Filtering Phase
Estimation (RFPE). Rather than using a set of hypotheses that implicitly
define a model for the system, we posit a prior model and directly update it
to find a model for our posterior distribution.  We achieve this by using a
Gaussian with mean $\mu$ and variance $\sigma^2$ to model our initial prior, perform a Bayesian update on samples
drawn from the distribution and then refit the updated samples to a Gaussian.
This strategy is used in a number of particle filter methods such as the extended Kalman filter and assumed density
filtering~\cite{haykin2004kalman,opper1998bayesian}.  We further optimize
this process by approximating the Bayesian inference step using rejection
sampling.  This can reduce the memory required by a factor of $1~000$ or more,
as we need only consider a single sample at a time. Our algorithm is described
below and pseudocode is given in \app{pseudocode}.

\begin{figure}[t!]
    \begin{centering}
        \includegraphics[width=0.723\linewidth]{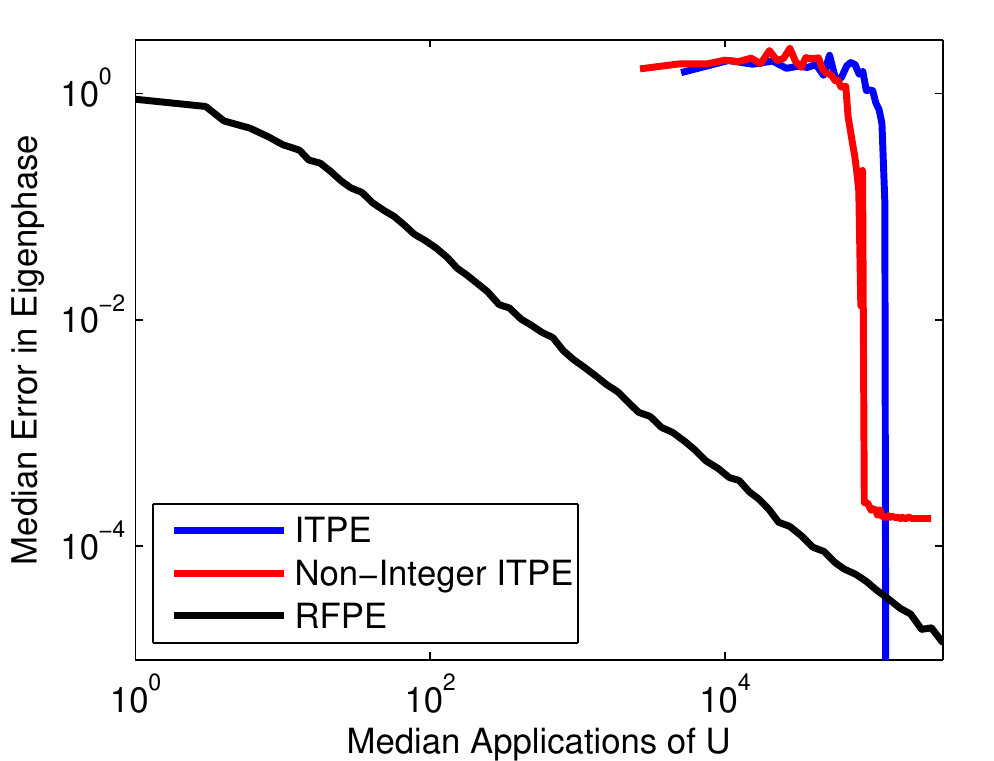}
    \end{centering}
    \caption{\label{fig:ITPEcmp}
     Comparison of RFPE to ITPE for $t=10~000$ with $100$ samples for $\phi_{\rm true} = 2\pi k/t$ at each measurement.  
    }
\end{figure}

\begin{figure*}
    \begin{centering}
\includegraphics[width=0.45\linewidth]{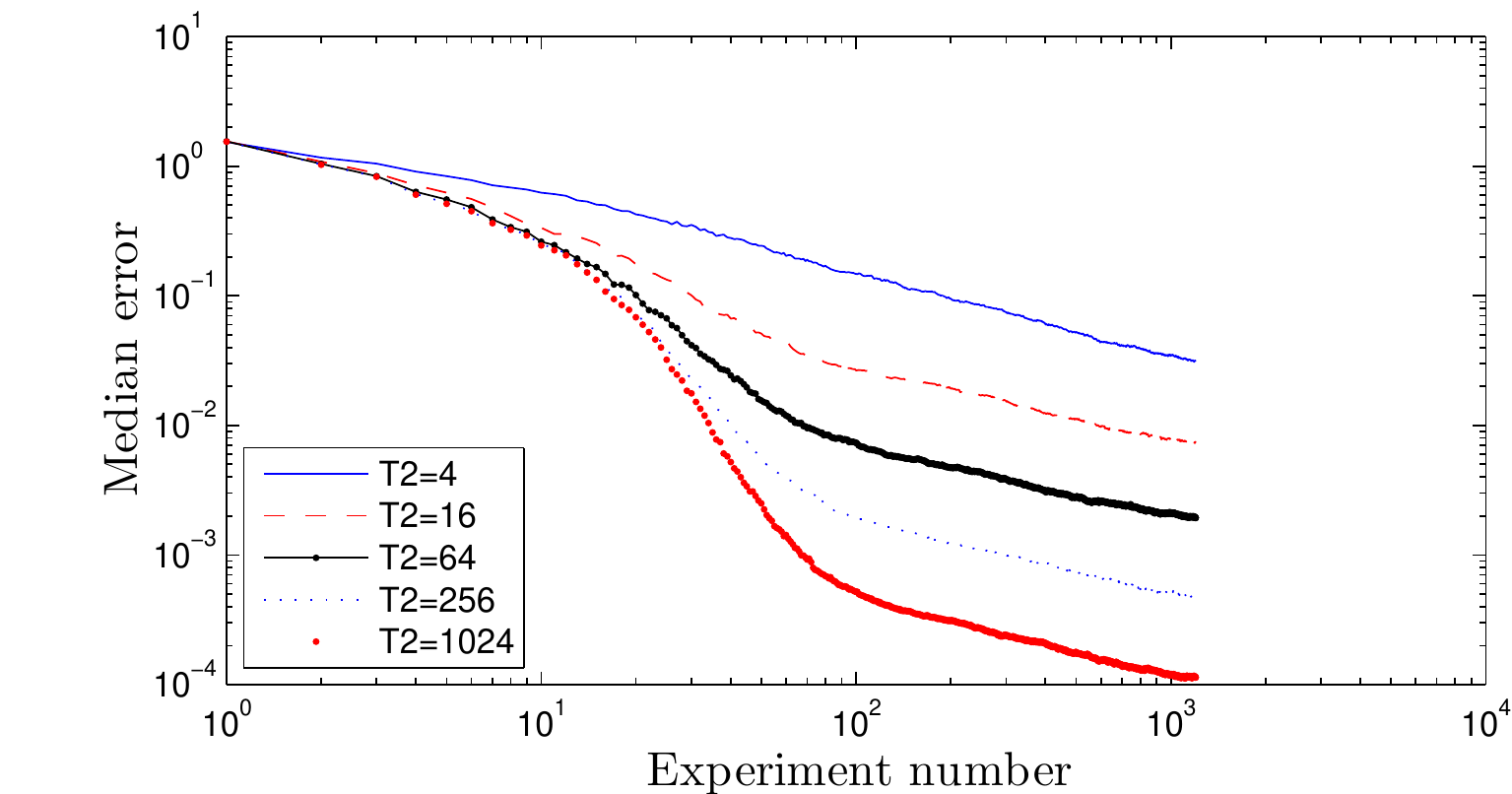}
        \includegraphics[width=0.45\linewidth]{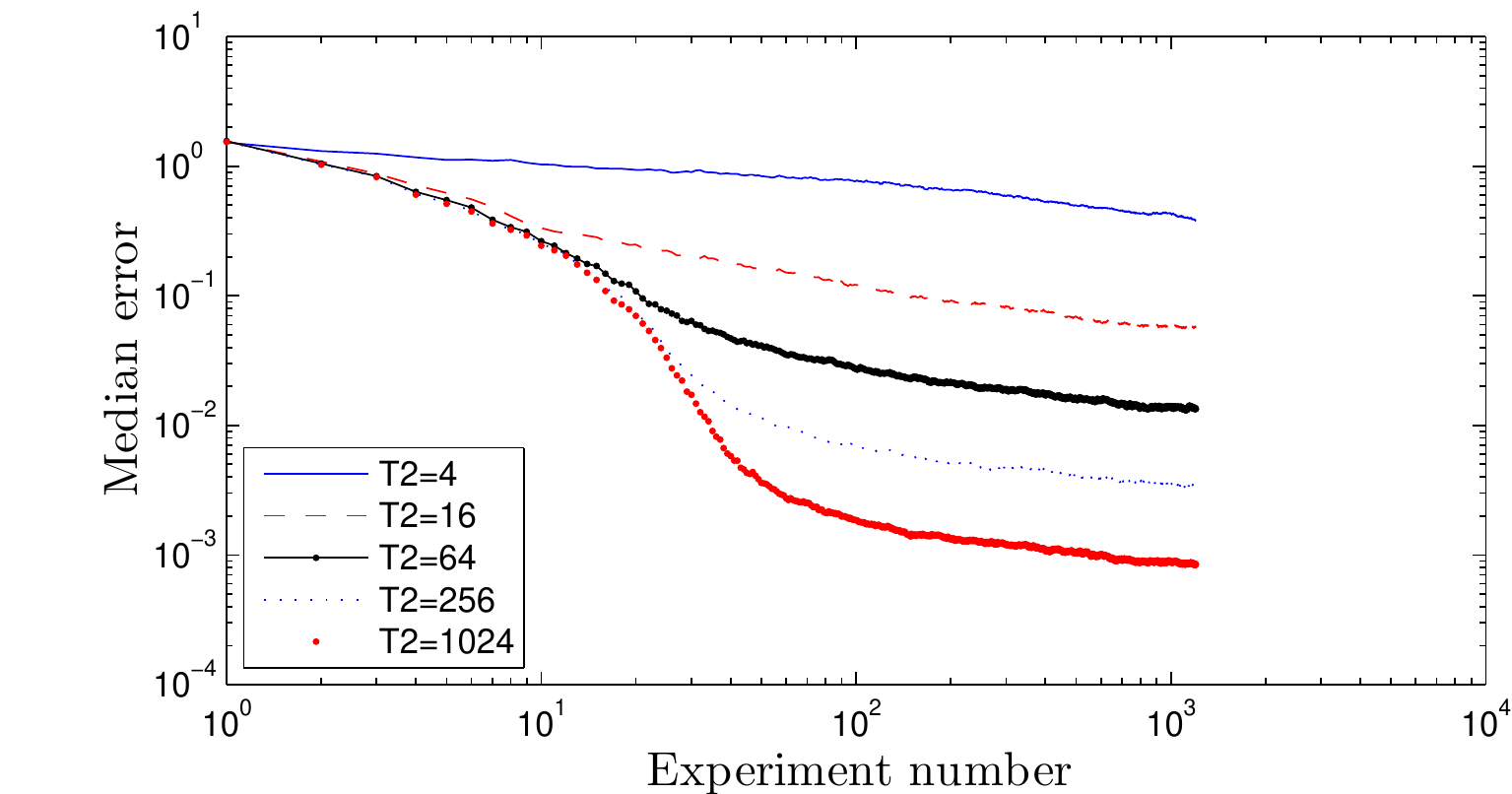}
    \end{centering}
    \caption{\label{fig:T2plot}
Median errors for phase estimation in decohering systems for experiments constrained to use $M\le T_2$ (left) and $M\le T_2/10$ (right).  We take $m=12~000$ and use $1~000$ random samples to generate the data above.  The initial state is taken to be a randomly chosen eigenstate in all cases.
    }
\end{figure*}

\begin{enumerate}
\item Perform experiment for given $\theta$, $M$ and observe outcome $E\in \{0,1\}$.
\item Draw $m$ samples from $\mathcal{N}(\phi|\mu,\sigma^2)$.
\item For each sample, $\phi_j$, assign $\phi_j$ to $\Phi_{\rm accept}$ with probability $P(E|\phi_j;\theta,M)/\kappa_E$, where $\kappa_E\in (0,1]$ is a constant s.t. $P(E|\phi_j;\theta,M)/\kappa_E\le 1$ for all $\phi_j,E$.
\item Return $\mu = \mathbb{E}(\Phi_{\rm accept})$ and $\sigma =\sqrt{\mathbb{V}(\Phi_{\rm accept})}$.
\end{enumerate}

The resultant samples are equivalent to those drawn from the posterior distribution
$P(\phi|E;M,\theta)$.  To see this, note that the probability density of a sample being accepted at $\phi=\phi_j$ is $ P(E | \phi; \theta, M) \mathcal{N}(\phi|\mu,\sigma^2)$.  Eqn~\eq{update} then implies 
\begin{equation}
    P(E | \phi; \theta, M) \mathcal{N}(\phi|\mu,\sigma^2) \propto P(\phi | E; \theta, M),
\end{equation}
which implies that the accepted samples are drawn from the posterior distribution.  

Although it is difficult to concretely predict the value of $m$ needed to make the error in the inference small, we show in \app{stability} that $m$ must scale at least as the inverse square of the relative fluctuations in the likelihood function.  Similarly, Markov's inequality shows that $m$ must scale at least as $\kappa_E/\int P(E|\phi;\theta,M)P(\phi) \mathrm{d}\phi$ to ensure that, with high probability, the mean can be accurately estimated from the accepted samples.  We introduce $\kappa_E$ to compensate for small expected likelihoods.

The main issue that remains is how to optimally choose the parameters $\theta$
and $M$. One approach is to locally optimize the Bayes
risk~\cite{granade_robust_2012}, but the resulting calculation can be too
expensive to carry out in online experiments that provide experimental results
at a rate of tens of megahertz or faster.  Fortunately, the particle guess heuristic (PGH) can give an
expedient and
near-optimal experiment for this class of likelihood
functions~\cite{wiebe_hamiltonian_2014},
\begin{align}
    M &= \left\lceil\frac{1.25}{\sigma}\right\rceil,~
    \theta \sim P(\phi).\label{eq:PGH}
\end{align}
The factor of $1.25$ comes from optimizing the cost of RFPE.   Non-integer $M$
are appropriate if $U=\ee^{-i H M}$.

\fig{PEerror} shows the error incurred using RFPE.  The most obvious feature is that the error shrinks exponentially with the number of experiments (which is proportional to the evolution time under the PGH) for $m>100$.  Roughly $150$ experiments are needed for the method to provide $32$ bits of accuracy in the median case.
We discuss the scaling in the mean in \app{var-reduction}.

The number of experiments needed to reach error $\epsilon$ scales as $O(\log(1/\epsilon))$ rather than $O(\log(1/\epsilon)\log\log(1/\epsilon))$ in Kitaev's method~\cite{Kit96,kitaev2002classical}.  Concretely, after $150$ experiments the median error for Kitaev's PE algorithm (with $s=10$) is roughly $10$ million times that of RFPE.

Although the number of experiments needed is small, the experimental time need
not be.  If the error shrinks as $\epsilon\in \Theta(\ee^{-\lambda N})$ where
$N$ is the experiment number then $T_{\exp}\in O(\sum_{N=1}^{N_{\max}} \ee^{\lambda N})\in O(\ee^{\lambda N_{\rm max}})\in O(1/\epsilon)$.  The
time required saturates the Heisenberg limit, up to a multiplicative constant.
Specifically, $\lambda\approx 0.17$ for RFPE.

\fig{ITPEcmp} compares RFPE to the information-theory PE (ITPE) method of Svore~\etal.  Although ITPE is inefficient and is non--adaptive, it is exact and so it is a natural benchmark to compare RFPE against.  We find  ITPE requires nearly five times the applications of $U$ if $\phi=2\pi k/t$ for integer $k<t$ and $t=10~000$.

Conversely,
ITPE requires only $25$ measurements to identify the phase with $50\%$
probability whereas RFPE requires $51$ experiments if $k$ is an integer.  If
the true value of $k$ is real-valued and ITPE is left unmodified, then  ITPE
fails to learn in the median because the long evolution times chosen lead to
contradictory possibilities which causes ITPE to become confused.  We correct
this by choosing $M\rightarrow \lceil M/2\rceil$ in such cases, which increases the number
of experiments to $35$ but also reduces the experimental time below that of
unmodified ITPE.  This shows that all three methods
tradeoff experimental and computational resources differently.  The latter point
is especially salient because ITPE is inefficient unlike RFPE.

\begin{figure*}
    \begin{centering}
        \includegraphics[width=0.4\linewidth]{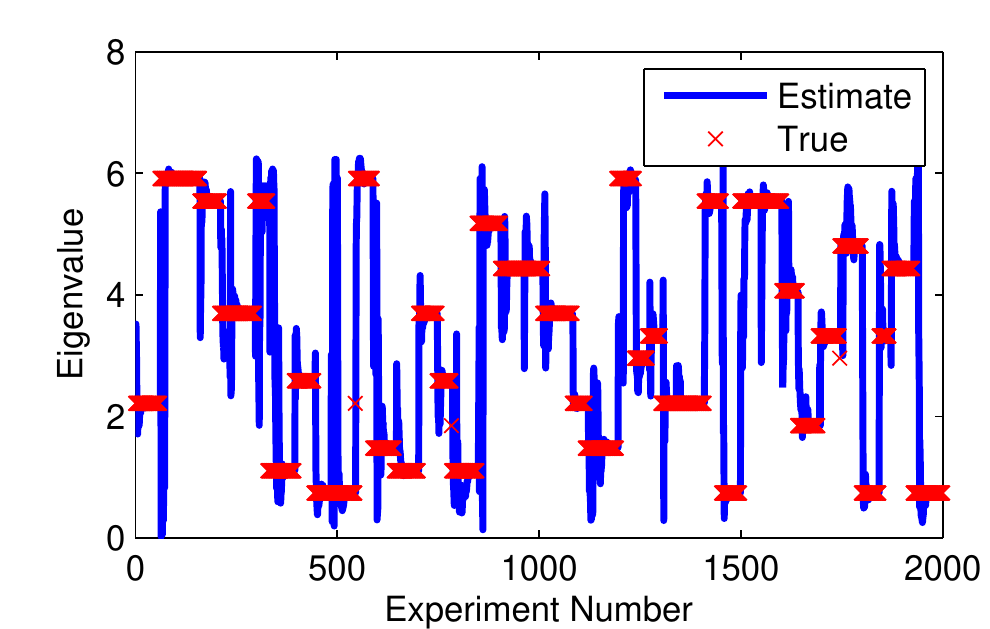}
        \hspace{5mm}
        \includegraphics[width=0.4\linewidth]{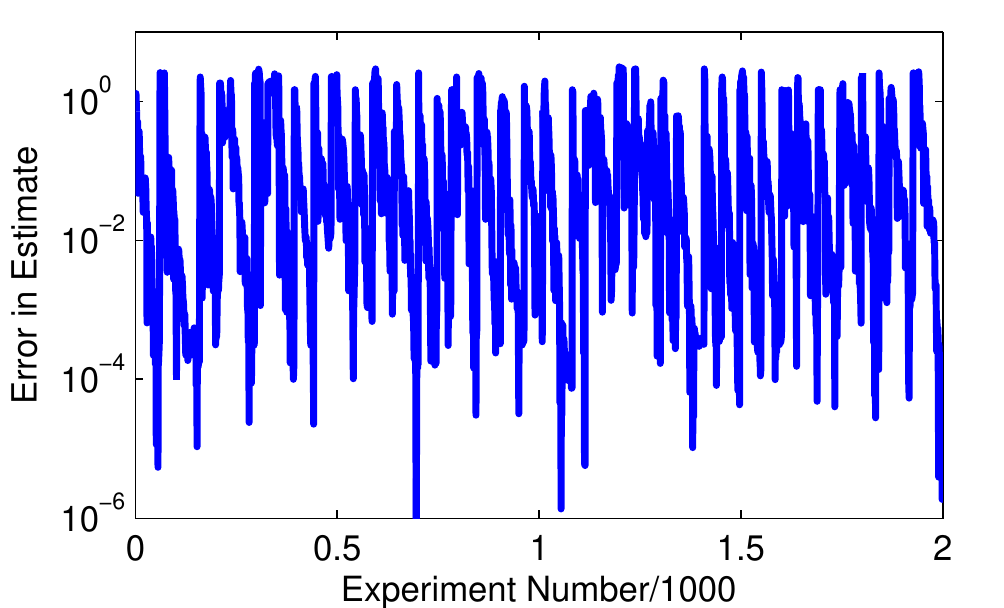}
    \end{centering}
    \caption{\label{fig:Errplot}
        Instantaneous estimate of the eigenphase  for a system with $16$ eigenvalues, $\Delta=0$, $\tau=0.1$ and $T_2=10^4$.
    }
\end{figure*}

\subsection{Phase estimation with depolarizing noise}
A criticism that has been levied lately at PE methods is that they can be impractical to execute on non--fault tolerant quantum hardware~\cite{PMS+14,MBL+14,WHT15}.  This is because phase estimation attains its quadratic advantage over statistical sampling by using exponentially long evolutions.  Decoherence causes the resultant phases to become randomized as time progresses, ultimately resulting in
$\lim_{M\rightarrow \infty} P(0|\phi;\theta,M) = \frac{1}{2}$.
In this limit, the measurements convey no information.  It similarly may be tempting to think that decoherence fundamentally places a lower limit on the accuracy of PE.   Bayesian inference to can, however, estimate $\phi$ using experiments with $M \approx T_2$~\cite{granade_robust_2012}.  

We model the effect of decoherence on the system by assuming the existence of a \emph{decoherence time} $T_2$ such that
\begin{gather}
    \label{eq:likedecohere}
    \begin{aligned}
        P(0|\phi) & = \ee^{-M/T_2}\left(\frac{1+\cos(M[\phi -\theta])}{2}\right)+\frac{1-\ee^{-M/T_2}}{2},\\
        P(1|\phi) & = \ee^{-M/T_2}\left(\frac{1-\cos(M[\phi -\theta])}{2}\right)+\frac{1-\ee^{-M/T_2}}{2}.
    \end{aligned}
\end{gather}
This model is appropriate when the time required to implement the controlled operation $\Lambda(U)$ is long relative the that required to perform $H$ and an arbitrary $Z$--rotation, as is appropriate in quantum simulation.  For simplicity, in the following we assume that $T_2$ is known.

Since $T_2$ places a limitation on our ability to learn, we propose a variation to \eq{PGH},
\begin{equation}
    \label{eq:pgh2}
    M = \min\left\{\left\lceil\frac{1.25}{\sigma}\right\rceil, T_2 \right\}.
\end{equation}
The experiments yielded by~\eq{pgh2} are qualitatively similar to locally optimized experiments for frequency estimation which are chosen to saturate the Cram\'er-Rao bound \cite{ferrie_how_2013}; however,~\eq{pgh2} requires nearly 100 fold less computing time to select an experiment.
We choose $M$ to cut off at $T_2$ in part because the locally optimized solutions choose to saturate at this value which can be understood using the following argument.  The Cram\'er-Rao bound for frequency estimation scales as $O(M^{-2})$~\cite{WGC15} in the absence of decoherence.  Eqn.~\eq{likedecohere} suggests that decoherence causes the posterior variance to increase as $O(\exp(2M/T_2))$. { As in the frequency estimation case \cite{ferrie_how_2013},} calculus reveals that $M=T_2$ optimally trades off these tendencies, which justifies~\eq{pgh2}.

\fig{T2plot} shows that RFPE smoothly transitions between the exponential scaling expected at short times and the polynomial scaling expected when decoherence becomes significant.  The error scales roughly as $1/N^{0.6}$ in this polynomial regime, which is comparable to the $1/\sqrt{N}$ scaling expected from statistical sampling.  Decoherence therefore does not necessarily impose a fundamental limitation on the accuracy that PE can achieve.  \fig{T2plot} also shows that confining the experiments to stay in a more coherent regime also can actually hurt the algorithm's ability to infer the phase, as expected.

\section{Tracking Eigenphases}

\fig{T2plot} shows the performance of RFPE when the initial quantum state is an eigenstate and is discarded after each experiment.  Performing phase estimation in this way minimizes the number of experiments, but can be prohibitively expensive if preparation of the initial eigenstate is prohibitively costly.  In such cases, it makes sense to follow the standard prescription for phase estimation by keeping the quantum state until it is clear that the initial eigenstate has been depolarized.  These depolarizations can cause RFPE to become confused because the new data that comes in is only consistent with hypotheses that have been ruled out.  We address this by performing inexpensive experiments to assess whether the state has depolarized and then restart the learning process. Restarting can even be valuable when $T_2=0$ if the input state is a superposition of eigenvectors or to recover if the RFPE becomes stuck (see \app{var-reduction}).

The following procedure addresses this issue in cases where the spectral gaps are promised to be at least $\Delta$.
\begin{enumerate}
\item After each update with probability $e^{-M/T_2}$ perform an experiment with $\theta=\mu$ and $M=\tau/\sigma$ for $\tau< 1$.
\item If result $=1$ then prepare initial state and reset $\sigma$.
\item After restart, continue as normal until $\sigma<\Delta$ then set $\sigma$ and $\mu$ to those of the closest eigenvalue.
\end{enumerate}

Steps 1 and 2 perform a one--sided test of whether the prior distribution is consistent with the current state.
If the prior probability distribution is correct then the probability of measuring $0$ is
\begin{equation}
    \frac{1}{\sigma\sqrt{2\pi}}\int_{-\infty}^\infty \cos^2\left(\frac{(\mu-x)\tau}{2\sigma}\right)\ee^{-\frac{(\mu-x)^2}{2\sigma^2}} \mathrm{d}x = \frac{1+\ee^{-\tau^2/2}}{2}.
\end{equation}
If $\tau=0.1$, the probability of measuring $0$ is approximately $0.998$ and hence measuring $1$ implies that the hypothesis that the prior is correct can be rejected at $p \le 0.002$. A Bayesian analysis of our reset rule
is given in \app{bf}.

Step 3 restarts the learning process if the learning process fails. It is important, however, to not throw away the spectral information that has been learned prior to the restart.  Step 3 reflects this by checking to see if the current estimate of the eigenphase corresponds to a known eigenstate and then sets $\mu$ to be the estimate of the eigenvalue and $\sigma$ to be its uncertainty if such an identification is made.  This allows the algorithm to resume learning once the depolarized state is projected onto a known eigenstate.

The test process also permits the eigenvalue of an eigenstate in a decohering system to be estimated in real time. \fig{Errplot} shows that the restarting algorithm can rapidly detect a transition away from the instantaneous eigenstate and then begin inferring the eigenvalue of the system's new instantaneous eigenstate.  

\section{Conclusion}

Our work makes phase estimation more
experimentally relevant by substantially reducing the
experimental time required and also by making the process resilient
to decoherence. This is especially critical as current experiments push past the classical
regime and IPE becomes increasingly impractical in lieu of fault-tolerance.
In particular, the ability of our algorithm to learn in the presence of decoherence provides an efficient alternative to the
variational eigensolvers used in present day experiments~\cite{PMS+14,MBL+14,WHT15}.  

Looking at the problem of phase estimation more generally, it is clear that it is in essence an inference problem.
Our work shows that modern ideas from machine learning can be used therein to great effect.  
It is our firm belief that by combining the tools of data science with clever experimental design,
further improvements can be seen not only in phase estimation but also quantum metrology in general.

\acknowledgments{We thank B. Terhal, K. Rudinger and D. Wecker for useful comments.}

%

\pagebreak
\appendix

\onecolumngrid

\section{Variance Reduction Strategies}
\label{app:var-reduction}

An important drawback of RFPE is  the tails for the distribution of errors in phase estimation can be quite fat in typical applications, meaning that there is significant probability that the error in the inferred eigenphase is orders of magnitude greater than the median.  Such a distribution can be seen in~\fig{PEerrorhist} where we see that although the median error is roughly $10^{-10}$ radians after $100$ updates for $m>50$  a non--negligible fraction of the experiments have error on the order of $1$.    

Fortunately, the need to always repeat the algorithm and use a majority voting scheme to reduce the variance of the estimate is mitigated by the fact that the algorithm outputs $\sigma$ which estimates the uncertainty in the resultant eigenphase. 
Alternatively, less aggressive experiments can be chosen in the guess heuristic or multi--modal models for the prior distribution (such as Gaussian mixture models) can be used.  The later approach can be quite valuable in computer vision and machine learning, however here we have an additional freedom not typically enjoyed in those fields: we can perform adaptive experiments to test to see if our current hypothesis is correct.

The central idea behind our restarting strategy is to examine the decay of $\sigma$ with the number of experiments.  In ideal cases, the error decays exponentially which means that it is easy to see when the algorithm fails by plotting $\sigma$ on a semi--log plot.  Such intuition can be easily automated.  The idea behind our restarting algorithm is to first estimate the derivative of $\log(\sigma)$ with respect to the experiment number and determine whether it is less than a threshold.  If it is less than the threshold, perform an experiment to test to see if the value of $\mu$ that has been learned so far is accurate (as per our incoherent phase estimation algorithm).  If it is found to be innacurate then restart the algorithm and abandon the information learned so far.  The restarting portion of this algorithm is given in~\alg{restart}.

When a restarting strategy like this is employed, we need to modify the model selection criteria.  Previously, we used the value of $\mu$ yielded by the most recent experiment as our estimate.   Given that a restart late in the algorithm can reset all information learned about a model, it makes sense in this context to use the value of $\mu$ corresponding to the smallest $\sigma$ observed.  This corresponds to the model that the inference algorithm has the greatest certainty.

We see in~\fig{restart} that this strategy of restarting substantially reduces the weights of the tails.  In fact, the mean error in the inference falls from $0.0513$ radians to $1.08\times 10^{-6}$ radians.  This shows that this resetting strategy substantially reduce the probability of a large error occuring in the estimate of the eigenphase.  To our knowledge, this data represents the first time that a heuristic method has been successful in reducing the mean error in frequency estimation to an acceptable level of uncertainty~\cite{granade_robust_2012}.  Previous approaches that succeeded in making the mean--error small used costly numerically optimized experiments, which render such approaches impractical for online phase estimation~\cite{granade_robust_2012,ferrie_how_2013, wiebe_hamiltonian_2014,wiebe_quantum_2014-1,WGC15}.

Note that Kitaev's phase estimation algorithm also can have a high probability of failure throughout the algorithm, so one might wonder if these restarting strategies might also be of value for traditional phase estimation algoirthms.  Unfortunately, these strategies cannot be easily translated to Kitaev's PE algorithm because the most expensive experiments are performed first.  Thus, restarting does not allow the algorithm to combat decoherence, nor can it be used to inexpensively fix situations where one of the bits is inferred improperly.

\begin{figure*}
    \begin{centering}
        \includegraphics[width=0.3\linewidth]{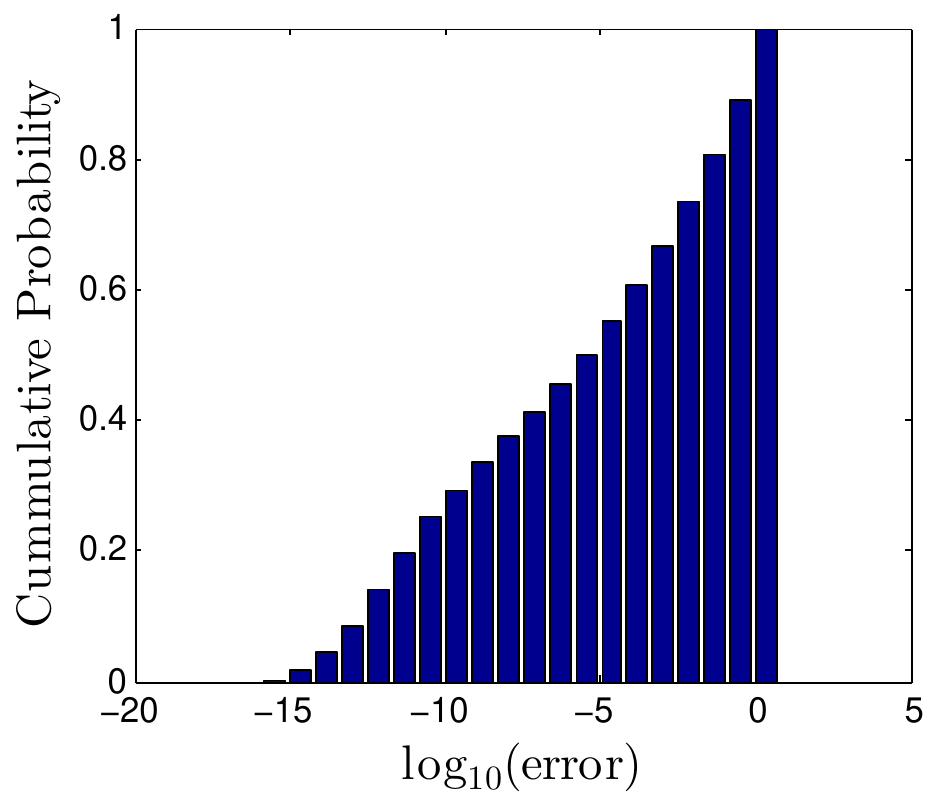}
        \includegraphics[width=0.3\linewidth]{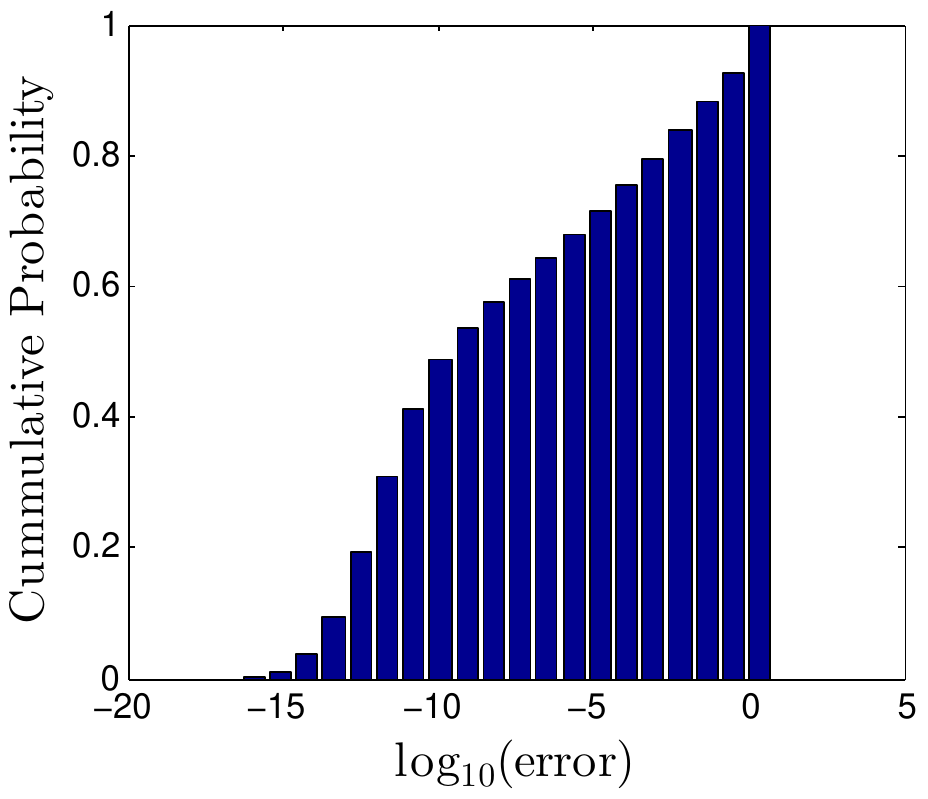}
        \includegraphics[width=0.3\linewidth]{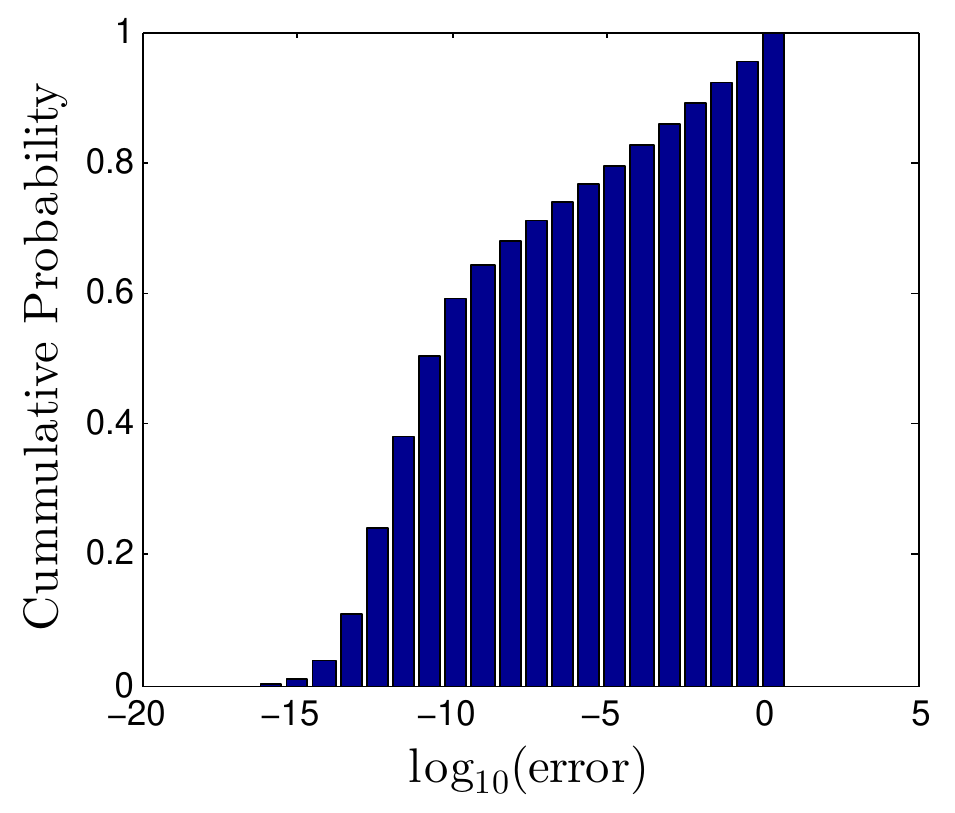}
    \end{centering}
    \caption{\label{fig:PEerrorhist}
     Cumulative distribution function of probability that PE error is less than $x$ after $150$ updates for $m=50$ (left) $m=100$ (middle) $m=200$ (right).
    }
\end{figure*}

\begin{figure*}
    \begin{centering}
        \includegraphics[width=0.35\linewidth]{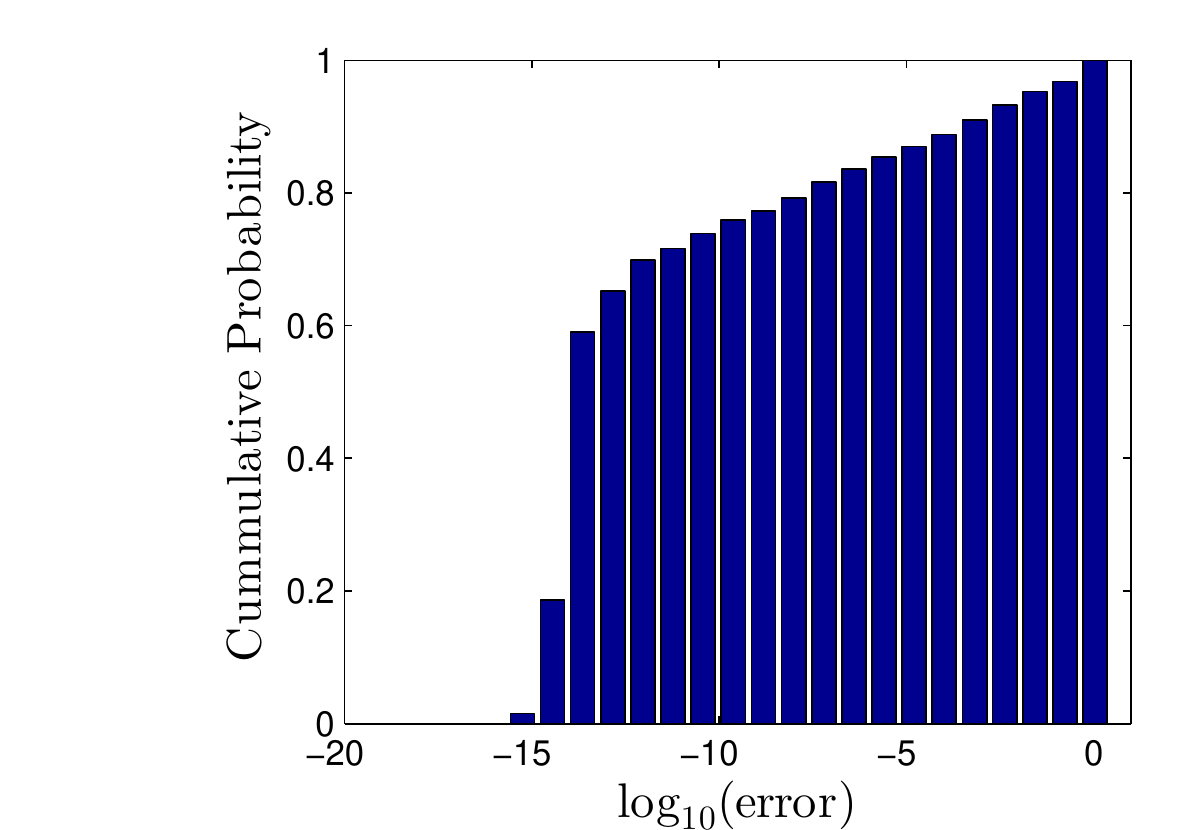}
        \includegraphics[width=0.35\linewidth]{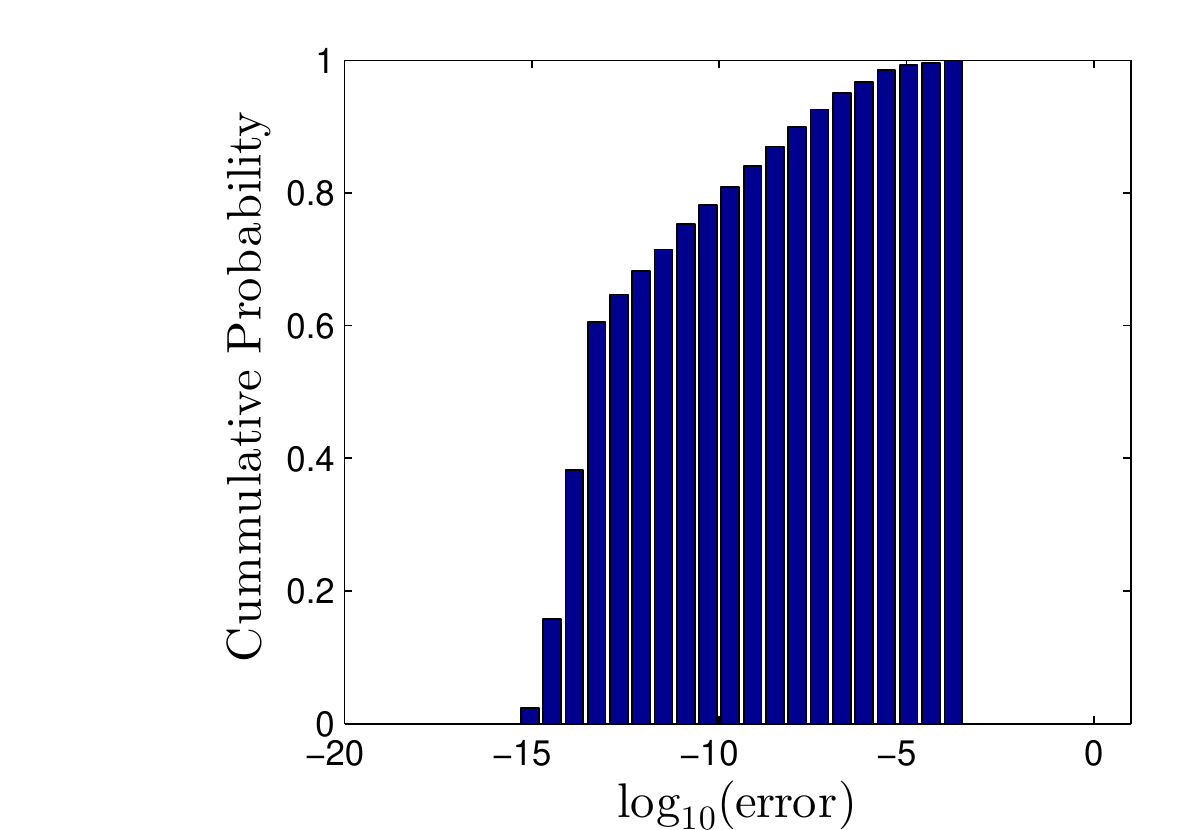}
    \end{centering}
    \caption{\label{fig:restart}
     Cumulative distribution function of probability that PE error is less than $x$ after $200$ updates for $m=2000$ for $\Gamma=\infty$ (left) $\Gamma=0.1$ (right) and $\tau=0.1$.  Estimate of CDF consists of $1000$ random eigenphase inference problems with $T_2=\infty$.
    }
\end{figure*}

\section{Stability Against Errors in the Likelihood Function}

Our algorithm is not only robust against well known depolarizing noise but is robust against~\emph{uncharacterized noise sources} as well.  We demonstrate this in~\fig{gamma} wherein we introduce depolarizing noise of strength $\gamma$ to our experimental system, but do not include such noise in the likelihood function.  For example, with $\gamma=0.4$, the measurement outcome of phase estimation is replaced with a random bit with probability $40\%$.  Although this may seem qualitatively similar to the numerical examples for depolarizing noise considered in the main body of the paper, this case is much more pathological because it is only used to generate the experimental data and is not permitted in the model of the system used in the likelihood function.  This raises the possibility that the algorithm could become confused due to experimental results that are fundamentally inconsistent with the assumed likelihood function for the system.

\fig{gamma} shows that the inclusion of uncharacterized depolarizing noise does not actually prevent the eigenvalue from being estimated.  Rather it reduces the number of bits per experiment that the algorithm can infer.  We see this by fitting the exponential part of the data in~\fig{gamma} (along with similar data for $\gamma=0.1$, $\gamma=0.3$, $\gamma=0.5$) and find that the error decay exponent, $\lambda$, to shrink as roughly $\lambda \approx 0.17e^{-3.1\gamma}$ it does not prevent our algorithm from learning at an exponential rate (until depolarizing noise becomes significant).  Thus RFPE continue to work even in the presence of depolarizing noise sources that are both strong and uncharacterized.

\section{Stability of Rejection Filtering PE}
\label{app:stability}

\begin{figure}
    \begin{centering}
\includegraphics[width=0.45\linewidth]{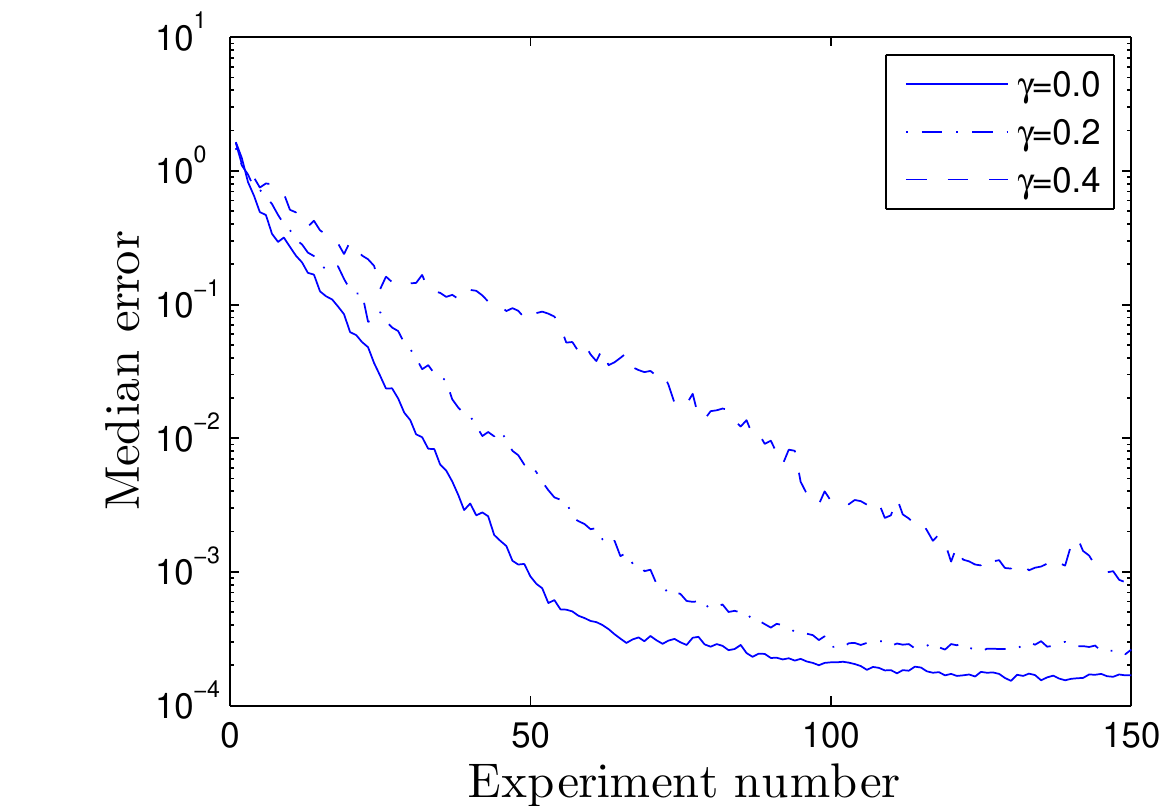}
    \end{centering}
    \caption{\label{fig:gamma}
Errors in inference for phase estimation with different levels of un-modeled noise, $\gamma$, for $T_2=1000$.  In each instance the initial state was taken to be a fixed, but unknown, eigenstate of the Hamiltonian.  The median was found for $100$ randomly chosen values of the eigenphase of this fixed initial eigenstate.
    }
\end{figure}

One way in which the rejection sampling algorithm can break down is if the likelihood function becomes too flat relative to the number of discrete samples, $m$, drawn from the prior distribution.  This breakdown occurs because the sample variance in the posterior mean is much greater than the difference between the prior mean and posterior means, which are not expected to be approximately the same if the likelihood function has little variation.  This in effect implies that the dynamics of the mean will essentially be a random walk and as a result we do not expect that our rejection sampling method to perform well if the likelihood function is too flat.  

In practice, the flatness of the distribution can be reduced by batching several experiments together and processing them.  This process is in general inefficient, but can be made efficient if an appropriate value of $\kappa_E$ is known for each datum recoreded.

The question remaining is: how flat is too flat?  We show in the theorem below that if the number of samples taken from the true posterior distribution does not scale at least inverse quadratically with the scale of relative fluctuations in the likelihood function then we expect the posterior variance to be much greater than the shift in the means.  
\begin{theorem}
Assume that for all $j$ $P(E|x_j) =\alpha+\delta_j$ where $|\delta_j|\le \delta$ and $\alpha \ge 10\delta$ and assume that $x_j\sim P(x_j)$.  If we then define $\mu_0 := \sum_{j} P(x_j) x_j$, $\mu_1:= \sum_j P(x_j|E) x_j$ and $\sigma_1^2$ to be the posterior variance then $|\mu_1 - \mu_0| \in \Omega(\sigma_1/\sqrt{m})$ only if
$
\frac{\alpha^2}{\delta^2}\in O(m) .
$\label{thm:stability}
\end{theorem}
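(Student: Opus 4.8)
The plan is to prove a deterministic, sample-size-independent inequality of the form $|\mu_1-\mu_0|\le C\,(\delta/\alpha)\,\sigma_1$ with $C$ an absolute constant, and then combine it with the hypothesis $|\mu_1-\mu_0|\in\Omega(\sigma_1/\sqrt{m})$ to force $\alpha^2/\delta^2\in O(m)$.

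First I would write Bayes' rule in discrete form. Since $P(x_j|E)=(\alpha+\delta_j)P(x_j)/Z$ with normalization $Z=\sum_k(\alpha+\delta_k)P(x_k)=\alpha+\bar\delta$, where $\bar\delta:=\sum_k\delta_k P(x_k)$ satisfies $|\bar\delta|\le\delta$, I would substitute into $\mu_1=\sum_j P(x_j|E)x_j=(\alpha\mu_0+\sum_j\delta_j P(x_j)x_j)/(\alpha+\bar\delta)$ and simplify to obtain the exact identity
\begin{equation*}
    \mu_1 - \mu_0 = \frac{\sum_j \delta_j\, P(x_j)\,(x_j - \mu_0)}{\alpha + \bar\delta}.
\end{equation*}
This is the heart of the argument: the shift in the mean is a $\delta$-weighted sum of deviations of $x_j$ about $\mu_0$, divided by a denominator bounded below by $\alpha-\delta\ge\tfrac{9}{10}\alpha$ (using $\alpha\ge10\delta$).

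Next I would bound the numerator and trade the prior spread for the posterior spread. Bounding the numerator in absolute value by $\delta\sum_j P(x_j)|x_j-\mu_0|$ and applying Cauchy--Schwarz gives $\sum_j P(x_j)|x_j-\mu_0|\le\sigma_0$, the prior standard deviation, so $|\mu_1-\mu_0|\le\tfrac{10}{9}(\delta/\alpha)\sigma_0$. To replace $\sigma_0$ by $\sigma_1$ I would again invoke $\alpha\ge10\delta$, which pins the likelihood ratio: $P(x_j|E)/P(x_j)=(\alpha+\delta_j)/(\alpha+\bar\delta)\in[\tfrac{9}{11},\tfrac{11}{9}]$, so the prior and posterior weight assignments agree pointwise up to a fixed constant factor. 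Since the variance of a weighted distribution is the minimum over $c$ of its weighted mean square deviation from $c$, evaluating the prior variance at $c=\mu_1$ gives $\sigma_0^2\le\sum_j P(x_j)(x_j-\mu_1)^2\le\tfrac{11}{9}\sum_j P(x_j|E)(x_j-\mu_1)^2=\tfrac{11}{9}\sigma_1^2$. Chaining the two estimates yields $|\mu_1-\mu_0|\le C\,(\delta/\alpha)\,\sigma_1$ with $C=\tfrac{10}{9}\sqrt{11/9}$.

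Finally I would combine this with the hypothesis. If $|\mu_1-\mu_0|\ge c'\sigma_1/\sqrt{m}$ for some constant $c'>0$ (the content of $\Omega(\sigma_1/\sqrt{m})$), then $c'\sigma_1/\sqrt{m}\le C(\delta/\alpha)\sigma_1$; cancelling $\sigma_1$ (which is positive in the nondegenerate case, while $\sigma_1=0$ forces $\mu_1=\mu_0$ by the identity above so the claim is vacuous) gives $\alpha/\delta\le(C/c')\sqrt{m}$, i.e. $\alpha^2/\delta^2\in O(m)$. The only genuinely delicate step is the comparison of $\sigma_0$ and $\sigma_1$: this is exactly where $\alpha\ge10\delta$ does its work, guaranteeing that the Bayesian reweighting neither collapses nor inflates the spread by more than a constant factor, and one must apply the variational inequality for the variance in the correct direction. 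Everything else is elementary algebra plus a single Cauchy--Schwarz bound.
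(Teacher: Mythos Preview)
Your proof is correct and follows the same overall strategy as the paper: derive the exact identity $\mu_1-\mu_0=\big(\sum_j\delta_jP(x_j)(x_j-\mu_0)\big)/(\alpha+\bar\delta)$, bound it by $O((\delta/\alpha)\sigma_0)$ via Cauchy--Schwarz, then show $\sigma_0\in O(\sigma_1)$ and combine with the hypothesis. The only substantive difference lies in how you establish $\sigma_0\in O(\sigma_1)$. The paper does this by a direct multi-line computation of $|\sigma_1^2-\sigma_0^2|$, adding and subtracting terms and repeatedly invoking the mean-shift bound to arrive at $|\sigma_1^2-\sigma_0^2|\le 10(\delta/\alpha)\sigma_0^2$, whence $\sigma_1^2\ge\sigma_0^2(1-10\delta/\alpha)$. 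Your route is cleaner: you observe that $\alpha\ge10\delta$ pins the pointwise likelihood ratio $P(x_j|E)/P(x_j)\in[9/11,11/9]$, and then use the variational characterization of the variance ($\sigma_0^2\le\sum_jP(x_j)(x_j-\mu_1)^2$) together with the ratio bound to get $\sigma_0^2\le\tfrac{11}{9}\sigma_1^2$ in two lines. This avoids the paper's somewhat delicate bookkeeping and makes transparent exactly where the hypothesis $\alpha\ge10\delta$ is doing its work; the paper's calculation, on the other hand, yields the sharper quantitative statement that $\sigma_1^2$ and $\sigma_0^2$ differ by $O(\delta/\alpha)\sigma_0^2$, not merely that they are comparable.
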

\begin{proof}
Bayes' rule gives us
\begin{equation}
\left|\sum_j P(x_j|E) x_j -\mu_0\right|= \left|\frac{\sum_j P(E|x_j)P(x_j) (x_j -\mu_0)}{\sum_j P(E|x_j)P(x_j)}\right|=\left|\frac{\sum_j \delta_j P(x_j)(x_j -\mu_0)}{\sum_j P(E|x_j)P(x_j)}\right|.\label{eq:A1}
\end{equation}
Then using the Cauchy--Schwarz innequality, the triangle inequality and $\alpha\ge 2\delta$.
\begin{equation}
\left|\frac{\sum_j \delta_jP(x_j )( x_j -\mu_0)}{\sum_j P(E|x_j)P(x_j)}\right| \le \left|\frac{\delta \sqrt{\sum_j P(x_j) |x_j -\mu_0|^2}}{\alpha-\delta}\right|\le \frac{\delta  \sigma}{\alpha-\delta}\le \frac{2\delta{\sigma}}{\alpha}.\label{eq:A2}
\end{equation}
Thus the maximum shift in the posterior mean shrinks as the the likelihood function becomes increasingly flat, as expected.

Next we need to lower bound the posterior variance in order ensure that the value of $m$ chosen suffices to make the error small in the best possible case.
To do so we use the reverse triangle inequality:
\begin{equation}
\sigma_1^2 = |\sigma_1^2 -\sigma^2 +\sigma^2| \ge \sigma^2 - |\sigma_1^2-\sigma^2|.
\end{equation}
Thus it suffices to upper bound $|\sigma_1^2-\sigma^2|$ to lower bound $\sigma_1^2$.  To do so, note that $\alpha \ge 2\delta$ and hence

\begin{align}
|P(x_j|E)-P(x_j)| = \left|\frac{P(x_j)(\delta_j-\sum_jP(x_j)\delta_j)}{\alpha+\sum_j P(x_j)\delta_j}\right|\le \frac{P(x_j) 2\delta}{\alpha-\delta}\le \frac{P(x_j) 4\delta}{\alpha}.
\end{align}
Now the difference between the two variances can be written as
\begin{align}
|\sigma_1^2-\sigma^2| &= \left|\sum_j  P(x_j|E)(x_j-\mu_1)^2-P(x_j)(x_j-\mu_0)^2\right|\nonumber\\
 &\le \left|\sum_j  (P(x_j|E)-P(x_j))(x_j-\mu_1)^2\right|+\left|\sum_j P(x_j)\left((x_j-\mu_0)^2-(x_j-\mu_1)^2\right)\right|\nonumber\\
 &\le  \frac{4\delta}{\alpha}\left|\sum_j  P(x_j)(x_j-\mu_1)^2\right|+(\mu_1-\mu_0)^2\nonumber\\
&\le \frac{4\delta\sigma^2}{\alpha}+\frac{4\delta}{\alpha}\left|\sum_j P(x_j)[(x_j-\mu_1)^2-(x_j-\mu_0)^2]  \right|+(\mu_1-\mu_0)^2\nonumber\\
&\le \frac{4\delta\sigma^2}{\alpha}+(1+\frac{4\delta}{\alpha})(\mu_1-\mu_0)^2\le \frac{4\delta\sigma^2}{\alpha}+ \frac{12\delta^2\sigma^2}{\alpha^2}\le \frac{10\delta\sigma^2}{\alpha}.
\end{align}
Thus we have that
\begin{equation}
\sigma_1^2 \ge \sigma^2(1-10\delta/\alpha).
\end{equation}
Now assuming $\delta\le \alpha/10$ we have 
\begin{equation}
\sigma_1^2\in \Omega(\sigma^2).
\end{equation}
Finally, we note that
\begin{equation}
|\mu_1-\mu_0| \in \Omega(\sigma_1/\sqrt{m}) \Rightarrow \frac{\delta \sigma}{\alpha} \in \Omega(\sigma/\sqrt{m}),
\end{equation}
which is only true if $m\in \Omega(\alpha^2/\delta^2)$.
\end{proof}
We therefore see that the number of samples needed to track the small changes in a posterior distribution that happens when the likelihood function becomes extremely flat.  This condition is not sufficient because the actual components of the posterior mean may be shifted by a much smaller amount than the upper bounds used in the proof of~\thm{stability}.

In contrast, exact Bayesian inference requires a number of bits that scales as $O(\log(1/\delta))$ (assuming a fixed and discrete number of hypotheses).  Thus exact Bayesian inference (or to a lesser extent particle filter methods) can be preferable in cases where the likelihood function is extremely flat.  Such concerns can be somewhat be either be avoided entirely or batches of such experiments should be combined to produce a likelihood function that is much less flat and choosing an appropriate instrumental distribution to ensure that the success probability remains high.

\section{Bayes Factors for Reset Rule}
\label{app:bf}

Though in the main body, we have chosen to present the reset step in terms of
$p$-values for familiarity, we note that $p$-values are difficult to correctly
interpret and can lead to misunderstandings \cite{goodman_dirty_2008,hoekstra_robust_2014}. As an
alternative, one may prefer to use the Bayes factor to test whether the rejection
filter has failed. For example, the rejection filter could fail due to
a failure of the numerical approximation or because the eigenstate has
been lost, as described in the main body. For a uniform prior over
whether the rejection filter has failed, this reduces to the likelihood
ratio test
\begin{subequations}
    \begin{align}
        L & = \frac{\Pr(\text{result} = 1 | \text{prior wrong})}{\Pr(\text{result} = 1 | \text{prior correct})} \\
          & = \frac{                  
                  1 - \ee^{
                          - (\tau^2 \sigma_\reset^2 / 2 \sigma^2 + \sigma \tau / T_2)
                      }
                      \cos \left(
                        \tau \left(\mu -\mu _\reset \right) / \sigma
                      \right)
              }{
                  1-\ee^{-\tau^2 / 2}
              }
    \end{align}
\end{subequations}
where $\mu_\reset$ and $\sigma_\reset$ are the values of $\mu$ and $\sigma$
immediately following a reset. Using this test, the degree by which $L > 1$
informs as to the degree to which we should prefer the model proposed by the
reset rule. Again under the assumption of a uniform prior over the
validity of the reset rule,
\begin{equation}
    \Pr(\text{prior wrong} | \text{result} = 1) = L \Pr(\text{prior right} | \text{result} = 1).
\end{equation}
For instance, if the variance has been reduced by a factor of 100 from
its initial value ($\sigma = \sigma_\reset / 100$) and the current mean is
correct ($\mu = \mu_\reset$), then assuming $T_2 = 100$ and an initial standard
deviation of $\sigma_\reset = \pi / \sqrt{3}$, $L\approx8000$ for a result of
1. That is, the initial prior is 8,000 times as probably correct as the current
prior in this example.

\begin{figure}
    \begin{center}
        \includegraphics[width=0.7\textwidth]{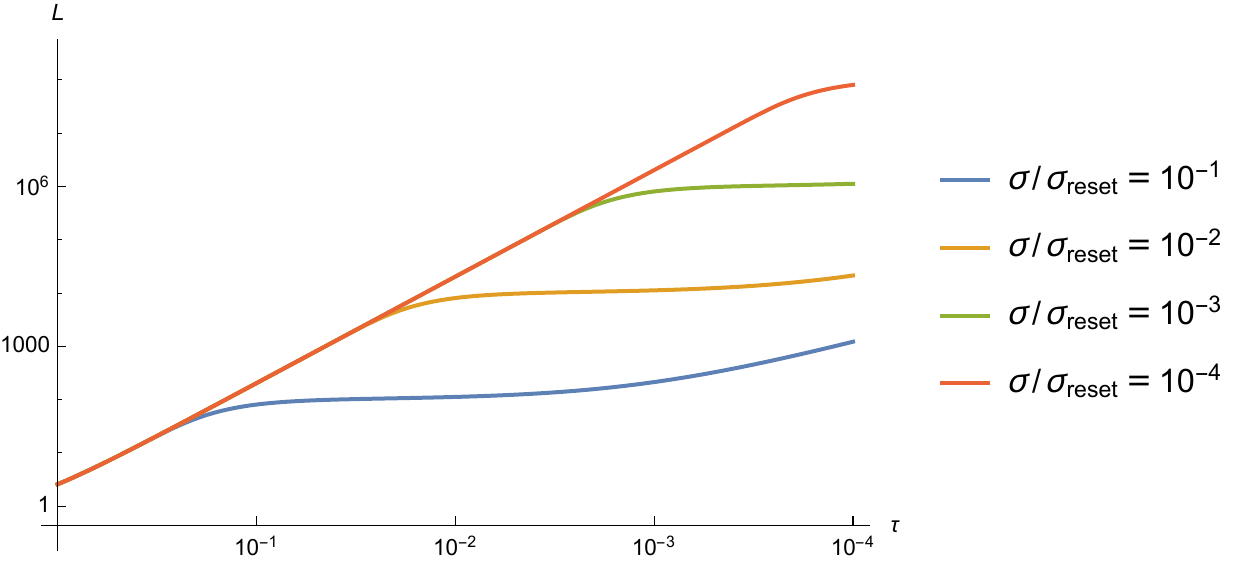}
    \end{center}
    \caption{
        \label{fig:reset-bf-thresholds}
        Likelihood ratio test values for various settings of the parameter
        $\tau$, and for various prior variances $\sigma / \sigma_\reset$.
        In this example, we suppose that $\mu = \mu_\reset$ and that $T_2 = 100$.
    }
\end{figure}

Importantly, this example rests on the assumption that one take a uniform
prior over whether the current prior is valid. This assumption corresponds to
that which an impartial observer might take in evaluating whether the numerical
approximations made by our rejection filter algorithm have failed for the
observed data. That is, the reset rule proposed corresponds performing an intervention
without relying on the full experimental data record. Choosing a threshold
other than $L = 1$ represents placing a different prior, as could be informed by observing
the failure modalities of many runs of the rejection filter method. As
demonstrated in \fig{reset-bf-thresholds}, because our reset rule resets with
probability 1 if a 1 is observed, choosing $\tau$ effectively sets the threshold
for $L$.
In practice,
however, because $\Pr(\text{result} = 1 | \text{prior correct}) \approx 0$, the reset
rule is only weakly dependent on the specific threshold one places on $L$.

\section{Pseudocode for Algorithms}
\label{app:pseudocode}

In the main body we sketched the details of our phase estimation algorithm.  Here we elaborate on this algorithm and discuss some of the
subtle details needed to make the algorithm work.  The first such subtlety stems from the fact that eigenphases are equivalent modulo $2\pi$.  
To see this problem, consider a Gaussian distribution centered at $0$.  If we take the outputs of the distribution in the branch $[0,2\pi]$ then we find that the mean of the distribution is $\pi$ rather than $0$.  Since the support of the initial Gaussian may be small at $\phi=\pi$, such errors can be catastrophic during the inference procedure.  This can be dealt by using the circular mean and by working with a wrapped normal distribution.  This approach is discussed in~\alg{crej2}.  For expedience, we eschew this approach and instead use a heuristic approach that does  not require a large number of trigonometric calls, which can be expensive if the device performing the inference does not have native trigonometric calls.

The heuristic approach, described in~\alg{crej}, uses rejection sampling and incremental methods to estimate the mean and standard deviation of the posterior distribution.  If $\sigma\ll 2\pi$ then the probability distribution is narrow and does not suffer from significant wrap around unless $\mu \mod 2\pi \approx 0$.  We address this by keeping track of each of the accepted $\phi_j$ as well as $\phi_j+\pi$.  If $\sigma\ll 1$ than it is impossible that both distributions suffer from substantial wrap around.  The arithmetic, rather than circular, mean and standard deviation are then computed for both using an incremental formula and the branch with the smallest variance is kept.  If the branch that was shifted by $\pi$ is chosen, then $\pi$ is subtracted from the reported mean.  The standard deviation does not need to be modified because it is invariant with respect to displacements of the mean.

While this approach is correct if $\sigma\ll 1$, it is only approximate if $\sigma$ is on the order of $1$.  In such cases, computation of the circular mean is much better justified, however we find that using our heuristic approach continues to provide acceptable results while avoiding trigonometric calls that can be expensive in some contexts.  An alternative approach to solving this problem is to begin each phase estimation run with a batch of random experiments, as per~\cite{SHF14}, before continuing to ensure that the posterior variance is small enough to neglect the wrap around effect.

The choice of the evolution time and the inversion angle strongly impacts the efficiency of the learning algorithm.  We provide below code for a modified version of the
particle guess heuristic of~\cite{wiebe_hamiltonian_2014}.  As discussed in the main body, we expect that choosing $M> T_2$ will typically lead to worse estimates of the eigenphase because the effects of decoherence overwhelm the information that can be gleaned from these long experiments.  As a result, we modify the particle guess heuristic  to never choose $M> T_2$.  We formally state this procedure in~\alg{pghT2}.

\begin{figure}[h]
\begin{algorithm}[H]
    \caption{Bayes Update for \CRej using Directional Statistics}
    \label{alg:crej2}
    \begin{algorithmic}

        \Require Prior mean and variance $\mu,\sigma$, measurement $E$,
            settings $M,\theta$, number of attempts $m$, scale $\kappa_E$

        \linecomment{Initialize accumulators to 0.}
	\State $(x_{\rm inc},y_{\rm inc},N_a) \gets 0$.
        \linecomment{Attempt each sample.}

        \For{$i \in 1 \to m$}
            \State $x \sim\frac{e^{-(\phi-\mu)^2/2 \sigma^2}}{\sigma\sqrt{2 \pi }},$
          
            \linecomment{Accept or reject the new sample.}
            \State $u \sim \operatorname{Uniform}(0, 1)$
            \If{$P(E | x) \ge \kappa_Eu$}
                \linecomment{Accumulate using the accepted sample using Cartesian coordinates.}
                \State $x_{\rm inc} \gets x_{\rm inc}+ \cos(x)$
                \State $y_{\rm inc} \gets y_{\rm inc}+ \sin(x)$
                \State $N_a \gets N_a +1$.
            \EndIf
        \EndFor

        \State $x_{\rm inc}\gets x_{\rm inc}/N_a $
        \State $y_{\rm inc}\gets y_{\rm inc}/N_a $

        \linecomment{Return mean, variance of the posterior using accumulators.}
	\State $\mu\gets {\rm Arg}(x_{\rm inc}+iy_{\rm inc})$.
\linecomment{Use circular standard deviation to estimate SD for wrapped Gaussian}
	\State $\sigma = \sqrt{\ln\left(\frac{1}{\sqrt{x_{\rm inc}^2 + y_{\rm inc}^2}}\right)}$
        \State\Return $(\mu,\sigma)$

    \end{algorithmic}
\end{algorithm}
\end{figure}

\begin{figure}[h]
\begin{algorithm}[H]
    \caption{Bayes Update for \CRej}
    \label{alg:crej}
    \begin{algorithmic}

        \Require Prior mean and variance $\mu,\sigma$, measurement $E$,
            settings $M,\theta$, number of attempts $m$, scale $\kappa_E$

        \linecomment{Initialize accumulators to 0.}
        \State{$(\mu_{\rm inc},\mu_{\rm inc}',V_{\rm inc},V_{\rm inc}',N_a) \gets 0$}
        \linecomment{Attempt each sample.}
        \For{$i \in 1 \to m$}
            \linecomment{Draw a sample using each ``cut'' of the prior.}
            \State $x \sim\frac{e^{-(\phi-\mu)^2/2 \sigma^2}}{\sigma\sqrt{2 \pi }},$
            \State $x\gets x \mod 2 \pi$.
            \State $x'\gets x+\pi \mod 2 \pi$.

            \linecomment{Accept or reject the new sample.}
            \State $u \sim \operatorname{Uniform}(0, 1)$
            \If{$P(E | x) \ge \kappa_Eu$}
                \linecomment{Accumulate using the accepted sample w/ each ``cut.''}
                \State $\mu_{\rm inc} \gets \mu_{\rm inc}+ x$
                \State $V_{\rm inc} \gets V_{\rm inc}+ x^2$
                \State $V_{\rm inc}' \gets V_{\rm inc}'+ x'^2$
                \State $N_a \gets N_a +1$.
            \EndIf
        \EndFor
        \linecomment{Return mean, variance of the posterior using accumulators.}
        \State $\mu'\gets \mu_{\rm inc}/N_a $
        \State $\sigma' \gets \min\left(\sqrt{\frac{1}{N_a -1}\left(V_{\rm inc} - \mu_{\rm inc}^2 \right)},\sqrt{\frac{1}{N_a -1}\left(V_{\rm inc}' - \mu_{\rm inc}'^2 \right)}\right)$
        \State\Return $(\mu',\sigma')$

    \end{algorithmic}
\end{algorithm}
\end{figure}

\begin{figure}
\begin{algorithm}[H]
    \caption{Restarting algorithm}
    \label{alg:restart}
\begin{algorithmic}
        \Require Prior \CRej state, records of all previous models found in the phase estimation algorithm $\mu$, $\vec{\sigma}$, initial standard deviation $\sigma_{\rm init}$, $M$, $T_2$, a counter $\text{CNT}$, $\Gamma$ and $\tau$.
	\Ensure $\text{CNT}$, $\sigma$ 
        \Function{$\text{Restart}$}{${\mu}$, $\vec \sigma$,$\sigma_{\rm init}$, $M$, $T_2$, $\text{CNT}$, $\Gamma$, $\tau$}
	\State $D \gets$ derivative of $\log{\sigma}$.
	\If {$D\ge \Gamma$ and $\text{CNT}<5$ or rand()$>\exp(-M/T_2)$}\Comment{Checks to see if the eigenstate is suspect.}
		\State Perform experiment with $M=\tau/\sigma$ and $\theta=\mu$.
		\If{Outcome is 0}\Comment{Test concludes state estimate is valid}
\State ${\rm CNT} \gets {\rm CNT}+1$.
			\State \Return $\text{CNT},\sigma$.
\Else\Comment{Test concludes state estimate is invalid}

\State $\text{CNT}\gets 0$
	\State $\sigma \gets \sigma_{\rm init}$
	\State \Return $\text{CNT}, \sigma$
\EndIf
\Else\Comment{Does not restart if state is not suspect}
	\State $\text{CNT}\gets \text{CNT}+1$
		\State \Return $\text{CNT},\sigma$.
	\EndIf
        \EndFunction
    \end{algorithmic}
\end{algorithm}
\end{figure}

\begin{figure}
\begin{algorithm}[H]
    \caption{PGH for decoherent phase estimation using \CRej}
    \label{alg:pghT2}
\begin{algorithmic}
        \Require Prior \CRej state $\mu$, $\Sigma$. Resampling kernel $\operatorname{F}$.
        \Ensure  An experiment $(M, \theta)$.
        \Function{$\text{PGH}_\text{\CRej}$}{$\mu$, $\Sigma$, $T_2$}
            \State $M \gets 1.25 / \sqrt{{\rm Tr}(\Sigma)}$
    \If {$M\ge T_2$}
        \State $M\sim f(x;1/T_2)$\Comment{Draw $M$ from an exponential distribution with mean $T_2$}.
    \EndIf
            \State $(-\theta/M) \sim \operatorname{F}(\mu, \Sigma)$
            \State \Return $(M, \theta)$.
        \EndFunction
    \end{algorithmic}
\end{algorithm}
\end{figure}

\end{document}